\newtheorem{theorem}{Theorem}
\newtheorem{lemma}{Lemma}
\newtheorem{conjecture}{Conjecture}
\newtheorem{proposition}{Proposition}
\newenvironment{proof}{\paragraph*{Proof}}{\hfill$\square$}
\begin{document}
\title{Further Progress on the GM-MDS Conjecture for Reed-Solomon Codes} 

\author{
  \IEEEauthorblockN{Hikmet Yildiz and Babak Hassibi}
  \IEEEauthorblockA{Department of Electrical Engineering\\
                    California Institute of Technology\\ 
                    Pasadena, California 91125\\
                    Email: \{hyildiz, hassibi\}@caltech.edu}
}

\maketitle

%%%%%%%%%%%%%%%%%%%%%%%%%%%%%%%%%%%%
%
%     ABSTRACT
%
%%%%%%%%%%%%%%%%%%%%%%%%%%%%%%%%%%%%

\begin{abstract}
  Designing good error correcting codes whose generator matrix has a support constraint, i.e., one for which only certain entries of the generator matrix are allowed to be non-zero, has found many recent applications, including in distributed coding and storage, multiple access networks, and weakly secure data exchange. The dual problem, where the parity check matrix has a support constraint, comes up in the design of locally repairable codes. The central problem here is to design codes with the largest possible minimum distance, subject to the given support constraint on the generator matrix. An upper bound on the minimum distance can be obtained through a set of singleton bounds, which can be alternatively thought of as a cut-set bound. Furthermore, it is well known that, if the field size is large enough, any random generator matrix obeying the support constraint will achieve the maximum minimum distance with high probability. Since random codes are not easy to decode, structured codes with efficient decoders, e.g., Reed-Solomon codes, are much more desirable. The GM-MDS conjecture of Dau et al states that the maximum minimum distance over all codes satisfying the generator matrix support constraint can be obtained by a Reed Solomon code. If true, this would have significant consequences. The conjecture has been proven for several special case: when the dimension of the code k is less than or equal to five, when the number of distinct support sets on the rows of the generator matrix m, say, is less than or equal to three, or when the generator matrix is sparsest and balanced. In this paper, we report on further progress on the GM-MDS conjecture. In particular, we show that the conjecture is true for all m less than equal to six. This generalizes all previous known results (except for the sparsest and balanced case, which is a very special support constraint).
\end{abstract}

%%%%%%%%%%%%%%%%%%%%%%%%%%%%%%%%%%%%
%
%     INTRODUCTION
%
%%%%%%%%%%%%%%%%%%%%%%%%%%%%%%%%%%%%

\section{Introduction}

There has been a recent interest in finding an MDS code with a generator matrix constrained on the support. This problem appears in many areas such as distributed coding and storage, multiple access networks, where each relay nodes has access to a subset of the sources \cite{halbawi2014distributed}, \cite{dau2015simple}, and weakly secure data exchange, where users have a subset of the data packets and want to exchange them without revealing information to eavesdroppers \cite{yan2013algorithms}, \cite{yan2014weakly}.

For a linear code with length $n$ and dimension $k$, the singleton bound on the minimum distance is $d_{\min}\leq n-k+1$. To achieve this bound, namely for MDS codes, any $k$ columns of the generator matrix $G$ should be linearly independent. Let $S_i$ be the set of positions of the zeros in the $i$th row of $G$. Then, for any subset $I\subset[k]$, the columns indexed in $\bigcap_{i\in I}S_i$ will have zeros in all their entries in $I$.\footnote{$[n]$ represents the set $\{1,2,\dots,n\}$}
Since those columns need to be linearly independent, for any nonempty $I\subset[k]$,
\begin{equation}
	\label{condition_intro}
	k-|I|\geq \left|\bigcap_{i\in I}S_i\right|
\end{equation}
is a necessary condition for the code to be MDS.

It is not hard to show that generating a random matrix with a constrained support satisfying (\ref{condition_intro}) will result in an MDS code with high probability if the field size is large enough \cite{ho2008byzantine}. Nonetheless, since random codes are not easy to decode, it is more preferable to design structured codes like Reed-Solomon codes, which have efficient decoders. The GM-MDS conjecture stated by Dau \textit{et al.} \cite{dau2014existence} describes (\ref{condition_intro}) as also a sufficient condition for the existence of a Reed-Solomon code whose generator matrix satisfies the support constraints.

Although the conjecture has many equivalent versions and partial proofs have been proposed in \cite{halbawi2014distributed},\cite{yan2014weakly},\cite{dau2014existence},\cite{heidarzadeh2017algebraic},
it has not been proven yet in general.
Heidarzadeh \textit{et al.} \cite{heidarzadeh2017algebraic}
proved it for $k\leq 5$.
Halbawi \textit{et al.} \cite{halbawi2014distributed}
proved the statement for $m\leq 3$ if there are $m$ distinct support sets on the rows of the generator matrix.
In \cite{halbawi2016balanced}, \cite{halbawi2016balanced2}, the result is proven when the generator matrix is sparsest and balanced.
Yan \textit{et al.} \cite{yan2014weakly}
give a partial induction step, a way to reduce the problem from $k$ to $k-1$ if one of the inequalities in (\ref{condition_intro}) holds with equality for some $I$ such that $|I|=k-1$.

We should mention that there is a related problem where, given a support constraint on the generator matrix, one would like to find a code with the largest minimum distance. This is because not every support constraint will admit an MDS code. Again it can be shown that, for a large enough field size, a random generator matrix satisfying the support constraints achieves the maximum minimum distance with high probability. In \cite{dau2014existence}, it has been shown that the existence of Reed-Solomon codes that achieve the maximum minimum distance is equivalent to the GM-MDS conjecture studied in this paper. Finally, we should mention that the dual problem where the support constraint is on the parity check matrix, is of interest in locally-repairable codes \cite{tamo2016bounds}, \cite{yekhanin2012locally}.

In this paper, we will group the rows with the same support constraint and prove the GM-MDS conjecture for ${m\leq 6}$, which improves all the previous results except the sparsest and balanced case.
Furthermore, we will give a more extensive way for reducing the problem, which covers any equality case, not limited to ${|I|=k-1}$.

The rest of the paper is organized as follows. In the next section we formulate the problem and, in fact, introduce a slightly more general problem including multisets which will be of use in our proof. The proof of the main result appears in Section 3 and the paper concludes with Section 4.

%%%%%%%%%%%%%%%%%%%%%%%%%%%%%%%%%%%%
%
%     PROBLEM SETUP
%
%%%%%%%%%%%%%%%%%%%%%%%%%%%%%%%%%%%%

\section{Problem Setup}

%%%%%%%%%%%%%%%%%%%%%%%%%%%
% GM-MDS conjecture
%%%%%%%%%%%%%%%%%%%%%%%%%%%
\subsection{GM-MDS conjecture}
Consider the generator matrix of a generalized Reed-Solomon code:
\begin{equation}
G_{RS}=\begin{bmatrix}
1 & 1 & \dots & 1\\
\alpha_1 & \alpha_2 & \dots & \alpha_n\\
\vdots & \vdots &  & \vdots \\
\alpha_1^{k-1} & \alpha_2^{k-1} & \dots & \alpha_n^{k-1}
\end{bmatrix}
\end{equation}
for distinct $\alpha_1,\dots, \alpha_n\in\mathbb{F}_q$. For a nonsingular  $T\in\mathbb{F}_q^{k\times k}$, define the $k\times n$ generator matrix
\begin{equation}
G=T\,G_{RS}
\end{equation}
The GM-MDS conjecture \cite{dau2014existence} is
\begin{conjecture}
	\label{conj_original}
	If $S_1,\dots, S_k\subset[n]$ satisfy for any nonempty $I\subset[k]$,
	\begin{equation}
	\label{condition_original}
	k-|I|\geq \left|\bigcap_{i\in I}S_i\right|,
	\end{equation}
	then, there exists $q,\alpha_1,\dots,\alpha_n\in\mathbb{F}_q$ and a nonsingular ${T\in\mathbb{F}_q^{k\times k}}$ such that $G_{ij}=0$ for all $i\in[k]$ and $j\in S_i$ where $G=TG_{RS}$.
	\hfill$\diamond$
\end{conjecture}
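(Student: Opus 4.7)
The plan is to recast the conjecture as a polynomial non-vanishing statement. Writing $t_i$ for the $i$-th row of $T$, the constraint $G_{ij}=0$ for $j\in S_i$ says that $f_i(x)=\sum_{l=0}^{k-1}T_{il}x^l$ vanishes at every $\alpha_j$ with $j\in S_i$. A first reduction is to assume $|S_i|=k-1$ for all $i$, by enlarging each $S_i$ greedily while preserving \eqref{condition_original}; the existence of such an enlargement is a standard combinatorial fact in this literature. Once $|S_i|=k-1$, $f_i(x)=c_i\prod_{j\in S_i}(x-\alpha_j)$ is determined up to scalar, so the entries of $T$ are, up to diagonal scaling, elementary symmetric polynomials in the $\alpha_j$'s. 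Thus $\det T$ equals, up to a nonzero factor, a fixed polynomial $P(\alpha_1,\dots,\alpha_n)$, and the conjecture reduces to showing $P\not\equiv 0$; by a Schwartz--Zippel argument, this yields distinct $\alpha_j$'s over a sufficiently large field for which $T$ is nonsingular.

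Next I would carry out an induction that strips away all easy cases. Group rows by support to obtain $m$ distinct sets $T_1,\dots,T_m$ with multiplicities $r_1,\dots,r_m$ summing to $k$, subject to the multiset MDS condition ${k-\sum_{j\in J}r_j\geq|\bigcap_{j\in J}T_j|}$ for every nonempty $J\subset[m]$. Whenever some $J$ achieves equality, the columns indexed by $\bigcap_{j\in J}T_j$ must be supported on the rows outside those groups, so the corresponding block of $T$ peels off, giving a subproblem on the peeled rows and columns together with a complementary one on the remaining rows and columns. One checks that both inherit a valid multiset MDS condition, and handles them by induction on $k$. This extends Yan et al.'s reduction from the case $|I|=k-1$ to arbitrary equality cases, matching the authors' announced strengthening.

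After these reductions, every MDS inequality is strict. In this strict regime I would try a direct construction. One route is to identify a monomial in $P$ that survives cancellation under a lexicographic ordering of the $\alpha_j$'s; another is to specialise the $\alpha_j$'s along a carefully chosen partition of the columns so that $P$ factors into smaller determinants to which the same induction applies. For $m\leq 6$, the combinatorial patterns of pairwise, triple, and higher intersections of $T_1,\dots,T_m$ remaining after all possible peeling are limited enough to enumerate, and I would finish the argument by checking each residual pattern by hand or via a structured case analysis organised around the intersection poset.

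The hard part is precisely this strict-inequality regime: while the peeling step disposes of equality cases cleanly, once all inequalities are strict there is no obvious single term in $\det T$ that is immune to cancellation, and one must exploit the fine combinatorial structure of the supports. I expect the bottleneck of this approach to be the exponential growth of the intersection poset in $m$, which is why $m=6$ is a natural stopping point; pushing beyond would likely require a genuinely new algebraic or combinatorial idea rather than a further enumeration.
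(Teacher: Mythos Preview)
Your setup and first reduction---enlarging to $|S_i|=k-1$, expressing $\det T$ as a polynomial in the $\alpha_j$'s, and grouping equal rows---matches the paper exactly, and your peeling of equality cases is precisely condition~(i) of the paper's Lemma~\ref{lemma_main}. But the paper's argument does not stop there, and this is where your plan has a genuine gap: peeling equality cases only splits the problem into subproblems with smaller $m$; after that step $n$ and $k$ are still unbounded, so there is no finite list of ``intersection patterns of $T_1,\dots,T_m$'' to enumerate, even for $m=6$. Your claim that the residual patterns are ``limited enough to enumerate'' is not justified by the equality-peeling alone.

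What makes the paper's proof work is a second, quite different family of reductions, phrased in terms of the dual sets $Q_j=\{t:j\in S_t\}\subset[m]$ recording which groups each column lies in. By substituting $\alpha_n=\alpha_{n-1}$ whenever $Q_{n-1},Q_n$ satisfy suitable incidence conditions---with a brief but essential detour through multisets---the paper shows that a \emph{minimal} counterexample must have $2\leq|Q_j|\leq m-3$ for every $j$, that some $|Q_j|\geq3$, that no $|Q_i\cup Q_j|=m-1$, and several further constraints (conditions (ii)--(viii) of Lemma~\ref{lemma_main}). These immediately force $m\geq6$, and for $m=6$ they pin down the admissible families $\{Q_j\}$ of subsets of $[6]$ so tightly that a short case analysis---over subsets of $[6]$, not over the unbounded parameters $n,k$---yields a contradiction. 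Neither of your two suggested routes in the strict regime (a surviving monomial, or factorisation via specialisation) supplies this dual $Q_j$-structure or the column-merging reductions; without them the enumeration you propose is not finite.
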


%%%%%%%%%%%%%%%%%%%%%%%%%%%
% Grouping equal subsets
%%%%%%%%%%%%%%%%%%%%%%%%%%%
\subsection{Grouping  equal subsets}
We will group the equal subsets and represent the position of zeros in the rows of the generator matrix by the sets $S_1,\dots,S_m$ with multiplicities $r_1,\dots,r_m$ where $\sum_{i=1}^mr_i=k$.
That is, the first $r_1$ rows of $G$ will have zeros at positions in $S_1$, the next $r_2$ rows will have zeros at positions in $S_2$ and so on.
Then, the condition (\ref{condition_original}) on these sets becomes
\begin{equation}
	\label{condition_becomes}
	k-\sum_{i\in I}r_i\geq \left|\bigcap_{i\in I}S_i\right|
\end{equation}
for any nonempty $I\subset[m]$.

Dau \textit{et al.} \cite{dau2014existence} have shown that it is sufficient to prove only the case when $|S_i|=k-1$ for $i\in[k]$ in \Cref{conj_original}. We will first show that in the grouped case, it is sufficient to prove the conjecture when $|S_i|=k-r_i$ for $i\in[m]$:

Suppose $(S_i)_{i=1}^m$ satisfies condition (\ref{condition_becomes}). Let ${S'_i=S_i\cup S''_i}$ for ${i\in[m]}$ and $S''_1,\dots,S''_m$ be any partition of ${[n']\backslash[n]}$ such that ${|S''_i|=k-r_i-|S_i|}$ where ${n'=n+\sum_{i=1}^mk-r_i-|S_i|}$.
Note that ${k-r_i-|S_i|\geq 0}$ due to (\ref{condition_becomes}) for $I=\{i\}$.
Then, $(S'_i)_{i=1}^m$ will also satisfy the condition (\ref{condition_becomes}).
Define ${G'_{RS}\in\mathbb{F}^{k\times n'}}$ similarly by introducing new variables $\alpha_{n+1},\dots,\alpha_{n'}$.
If $q,\alpha_1,\dots,\alpha_{n'},T$ is a solution for $(S'_i)_{i=1}^m$, then $q,\alpha_1,\dots,\alpha_{n},T$ will be a solution for $(S_i)_{i=1}^m$. 

After this assumption, the span of the grouped rows in $T$ will be uniquely determined. Therefore, we can analyze the singularity of one example:
\begin{equation}\arraycolsep=2.6pt\small
\label{matrixT}
T=\left[\begin{array}{cccccc}
0 & & \prod_{j\in S_1}\!-\alpha_j & \dots & \sum_{j\in S_1}\!-\alpha_j &  1\\
 & \iddots & & \iddots &\iddots \\
\prod_{j\in S_1}\!-\alpha_j & \dots & \sum_{j\in S_1}\!-\alpha_j & 1& & 0\\
\hline
&&\vdots\\
\hline
0 & & \prod_{j\in S_m}\!-\alpha_j & \dots & \sum_{j\in S_m}\!-\alpha_j& 1\\
& \iddots & & \iddots & \iddots \\
\prod_{j\in S_m}\!-\alpha_j & \dots & \sum_{j\in S_m}\!-\alpha_j & 1& & 0
\end{array}\right]
\end{equation}
which is partitioned into $m$ blocks where the $i$th block has ${k-|S_i|}$ rows formed by the coefficients of the polynomials $p_i$, $xp_i$, $\dots$, $x^{k-1-|S_i|}p_i$ where $p_i(x)=\prod_{j\in S_i}(x-\alpha_j)$.
The precondition
\begin{equation}
	\sum_{i=1}^mk-|S_i|=\sum_{i=1}^mr_i=k
	\iff \sum_{i=1}^m |S_i|=(m-1)k
\end{equation}
ensures that the matrix $T$ is $k\times k$ square.
Furthermore, in the multiplication $T\:G_{RS}$, the rows will consist of the substitution of $\alpha_i$'s in the polynomials of the form $x^{\ell}p_j$, which will have zeros at the desired positions.

As a result, we end up with an equivalent conjecture to \Cref{conj_original}:\\

\begin{conjecture}
	\label{conj_main}
	For $m\geq 2$, let $S_1,\dots, S_m\subset[n]$ such that $|S_i|\leq k-1$, $\sum_{i=1}^m |S_i|=(m-1)k$ and for any nonempty $I\subset[m]$,
	\begin{equation}
		\label{condition}
		k-\left|\bigcap_{i\in I}S_i\right|\geq \sum_{i\in I}k-|S_i|.
	\end{equation}
	Then, $\det T$ (which is a multivariate polynomial of $\alpha_i$'s) is not identically zero,
	where $T$ is given by (\ref{matrixT}).
	\hfill$\diamond$\\
\end{conjecture}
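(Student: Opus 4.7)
My plan is to establish \Cref{conj_main} by strong induction on the dimension $k$. The essential base case is $m=2$: here $|S_1|+|S_2|=(m-1)k=k$, so $T$ is the standard Sylvester matrix of $p_1$ and $p_2$ and $\det T=\pm\mathrm{Res}(p_1,p_2)=\pm\prod_{i\in S_1,\,j\in S_2}(\alpha_j-\alpha_i)$; condition (\ref{condition}) at $I=\{1,2\}$ forces $S_1\cap S_2=\emptyset$, and since the $\alpha_j$'s are formal variables, this resultant is not identically zero. The inductive step is driven by a reduction lemma triggered whenever (\ref{condition}) is tight on a proper, non-singleton subset $I\subsetneq[m]$.

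\textbf{Equality reduction.} The engine of the induction is a strengthening of Yan et al.'s reduction that admits any such equality witness. Suppose $2\leq|I|\leq m-1$ and $I$ saturates (\ref{condition}); set $A=\bigcap_{i\in I}S_i$, so $|A|=k-\sum_{i\in I}(k-|S_i|)$, and write $S_i=A\sqcup S_i'$ for $i\in I$. A dimension count shows that the $I$-block rows of $T$ form a basis of $q\cdot\mathbb{F}[x]_{<k-|A|}$, where $q(x)=\prod_{j\in A}(x-\alpha_j)$. Using the direct sum $\mathbb{F}[x]_{<k}=q\cdot\mathbb{F}[x]_{<k-|A|}\oplus\mathbb{F}[x]_{<|A|}$ to block-triangularize $T$ yields $\det T=\pm\det M_I\cdot\det M_{-I}$, where $M_I$ is precisely the matrix associated to the strictly smaller instance of \Cref{conj_main} on $(S_i')_{i\in I}$ with dimension $k-|A|$ (one checks the hypotheses descend using the tightness of $I$), and $M_{-I}$ is an $|A|\times|A|$ matrix whose entries in the CRT basis of $\mathbb{F}[x]/(q)$ have the form $\alpha_j^\ell p_{i'}(\alpha_j)$ for $i'\notin I$. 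The factor $\det M_I$ is nonzero by induction; for $\det M_{-I}$, I plan either to iterate the reduction by choosing a \emph{maximal} tight $I$ (so that $I\cup\{i'\}$ is forced tight for each $i'\notin I$, making $M_{-I}$ itself an instance of the conjecture) or to compute it directly as a Vandermonde-with-evaluations determinant using the inequalities at supersets of $I$.

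\textbf{Handling rigid configurations.} The delicate case is ``rigid'' configurations, in which every proper $I$ with $2\leq|I|\leq m-1$ is strict (singletons and $I=[m]$ are automatically tight but give no useful split). Here the inductive reduction does not apply, and I would argue nonvanishing of $\det T$ directly by either (a) identifying a surviving leading monomial in the Leibniz expansion of $\det T$ under a carefully chosen term order on the $\alpha_j$'s, whose coefficient evaluates to a product of Vandermonde-type factors, or (b) specializing a subset of the $\alpha_j$'s to collapse $\det T$ to a product of known nonvanishing subresultants, exploiting that (\ref{condition}) at $I=[m]$ forces $\bigcap_{i=1}^m S_i=\emptyset$. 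The main obstacle is that either approach requires a \emph{uniform} combinatorial prescription in terms of $m$ and the $S_i$; the absence of such a uniform prescription is precisely what constrains current techniques, and supplying one (or finding a cleaner alternate reduction that bypasses the rigid case entirely) is where the essential new work in a general proof would have to go.
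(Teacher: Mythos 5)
There is a genuine gap, and you in fact name it yourself: the ``rigid configurations'' case. Before addressing it, note that \Cref{conj_main} is not proved in this paper at all --- it is stated as a conjecture, and the paper establishes it only for $m\leq 6$ (\Cref{thm_uptosix}) by a minimal-counterexample analysis. So no complete argument exists to compare yours against; the question is whether your plan closes the part the paper leaves open, and it does not. Your base case $m=2$ is correct (the condition at $I=\{1,2\}$ forces $S_1\cap S_2=\emptyset$ and $T$ is the Sylvester matrix, so $\det T=\pm\mathrm{Res}(p_1,p_2)\neq 0$), and your equality reduction is essentially part (i) of \Cref{lemma_main}: the paper carries it out via the syzygy characterization of \Cref{prop_polyeq} (splitting $\sum q_ip_i=0$ along $p_0=\gcd_{i\in I}p_i$ into the instances $(S'_i)_{i\in I}$ with $k'=k-|S_0|$ and $(S_i)_{i\in\{0\}\cup([m]\setminus I)}$ with the same $k$ but fewer groups) rather than via an explicit block-triangularization, but the content is the same. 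Two technical quibbles there: the induction cannot be on $k$ alone, since the complementary factor lives at the same $k$ with smaller $m$, so you need the lexicographic order on $(m,n,k)$ as in the paper; and your parenthetical that maximality of a tight $I$ ``forces $I\cup\{i'\}$ tight'' is backwards --- maximality forces those supersets to be strict.

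The real issue is that after the equality reduction one is left exactly with the case where every $I$ with $2\leq|I|\leq m-1$ is strict, and neither of your sketched options (a leading-monomial argument or a specialization to subresultants) is developed into a proof; you concede that a uniform prescription is missing. This is precisely where the paper's remaining work goes, and it is instructive to compare: parts (ii) and (iii) of \Cref{lemma_main} are specializations of the type you gesture at in option (b) --- setting $\alpha_n=\alpha_{n-1}$ to merge two columns, which requires the multiset extension and \Cref{lemma_multisetreduce} when the merged element lands in $\bigcap_{i=1}^{m-1}S'_i$ --- and the slack of $1$ created by the strict inequalities from part (i) is exactly what makes these merges legal. But these reductions only yield combinatorial constraints on the incidence sets $Q_i$ ($2\leq|Q_i|\leq m-3$, some $|Q_i|\geq 3$, the union/intersection conditions), which suffice to derive a contradiction only for $m\leq 6$. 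So your proposal, as written, proves nothing beyond what the equality reduction gives and cannot be counted as a proof of \Cref{conj_main}; its honest value is as a correct identification of where the difficulty lies.
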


From now on, we will assume that $\alpha_1,\dots,\alpha_n$ are indeterminates and write $\det T=0$ or $\det T\neq 0$ to indicate that the determinant is identically zero or nonzero, respectively.

%%%%%%%%%%%%%%%%%%%%%%%%%%%
% Extension to multisets
%%%%%%%%%%%%%%%%%%%%%%%%%%%
\subsection{Extension to multisets}
Multisets are the generalization of the sets where multiple instances of the set elements are allowed \cite{blizard1988multiset}. The multiset extension for the sets $S_1,\dots,S_m$ will be useful later in the proof of our main results. Although multisets have no meaning regarding the positions of the zeros in the generator matrix, we can still define the matrix $T$ in (\ref{matrixT}) for $S_i$'s being multisets, in which case, the polynomials $p_i(x)=\prod_{j\in S_i}(x-\alpha_j)$ may have multiple roots. We will not write the conjecture for multisets; however, the fact that the condition (\ref{condition}) is necessary can be extended for multisets as well:\\

\begin{theorem}
	\label{thm_necessary}
	Let $S_1,\dots,S_m$ be multisets in the universe $[n]$ such that $|S_i|\leq k-1$, $\sum_{i=1}^m |S_i|=(m-1)k$. Define the matrix $T$ as in (\ref{matrixT}). If $\det T\neq 0$, then the condition (\ref{condition}) holds for any nonempty $I\subset[m]$.
	\hfill$\diamond$\\
\end{theorem}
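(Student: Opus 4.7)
The plan is to prove the contrapositive: assume condition (\ref{condition}) fails for some nonempty $I\subset[m]$, and show that $\det T=0$. The key reformulation is that $\det T\neq 0$ is equivalent to the rows of $T$ being linearly independent, which in turn is equivalent to the collection of polynomials
\[
\mathcal{B}=\bigl\{x^\ell p_i(x) : i\in[m],\ 0\leq \ell\leq k-1-|S_i|\bigr\}
\]
forming a basis of the $k$-dimensional space of polynomials of degree less than $k$ (the total count matches: $\sum_i(k-|S_i|)=k$). So proving linear dependence of a subset of these polynomials suffices.

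Suppose condition (\ref{condition}) fails for some nonempty $I\subset[m]$, and let $M=\bigcap_{i\in I}S_i$, where the intersection is understood as the multiset intersection (each element appearing with the minimum multiplicity across $\{S_i:i\in I\}$). Define $p_M(x)=\prod_{j\in M}(x-\alpha_j)$, counting multiplicities. The first key step is the divisibility observation: because $M\subseteq S_i$ as multisets for every $i\in I$, we have $p_M\mid p_i$, and hence $p_M\mid x^\ell p_i$ for every $\ell\geq 0$. Consequently every polynomial in the span of the rows of $T$ indexed by blocks in $I$ lies in the subspace
\[
V=\bigl\{p_M(x)q(x):\deg q< k-|M|\bigr\},
\]
which has dimension exactly $k-|M|$ inside the space of polynomials of degree less than $k$.

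The second step is to count: the rows of $T$ indexed by $I$ total
\[
\sum_{i\in I}(k-|S_i|) > k-|M|=\dim V,
\]
where the strict inequality is precisely the failure of (\ref{condition}). Hence these rows are linearly dependent as vectors in $V$, so $\det T=0$, which contradicts the hypothesis. This completes the proof.

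The main thing to be careful about is the multiset aspect: one must verify that the multiset intersection $M$ still satisfies $p_M\mid p_i$ (which is true since a root of multiplicity $\mu_j$ in $M$ has multiplicity at least $\mu_j$ in each $S_i$ for $i\in I$), and that the dimension count for $V$ is unaffected by repeated $\alpha_j$'s (it is, since $V$ is defined purely in terms of the polynomial $p_M$ and the degree bound). Beyond that the argument is pure linear algebra and no obstacle should arise.
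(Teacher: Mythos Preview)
Your proof is correct. The paper does not actually supply a proof of \Cref{thm_necessary}; it states the theorem and then immediately moves on to \Cref{prop_polyeq} and \Cref{lemma_multisetreduce}. Your argument is precisely the natural one in the paper's framework: it amounts to using the polynomial interpretation of the rows of $T$ (which the paper codifies in \Cref{prop_polyeq}) together with the pigeonhole/dimension count on the subspace of multiples of $p_M$. The handling of the multiset case is also correct, since divisibility $p_M\mid p_i$ follows from the multiplicity definition of multiset intersection, and the dimension of $V$ depends only on $\deg p_M=|M|$.
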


Now, we will introduce \Cref{prop_polyeq} and \Cref{lemma_main} regarding the multiset extension, which will be handy later when proving our main results. \Cref{prop_polyeq} is straightforward by definition of $T$ in (\ref{matrixT}).
\\

\begin{proposition}
	\label{prop_polyeq}
	Let $S_1,\dots,S_m$ be multisets in the universe $[n]$ such that $|S_i|\leq k-1$ and ${\sum_{i=1}^m |S_i|=(m-1)k}$. Let $p_1,\dots,p_m$ be the polynomials defined as ${p_i(x)=\prod_{j\in S_i}(x-\alpha_j)}$ for all $i\in[m]$.
	Then, $\det T=0$ if and only if there exists some polynomials $q_1,\dots,q_m$, not all zero, such that
	$\deg q_i \leq k-1-\deg p_i$ and
	$\sum_{i=1}^mq_ip_i=0$.
	\hfill$\diamond$\\
\end{proposition}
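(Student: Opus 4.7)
The plan is to recognize $T$ as the coefficient matrix of an explicit list of polynomials and translate linear dependence of rows into a polynomial identity. Identify the polynomial space $\mathcal{P}_{k-1}=\{f(x):\deg f\le k-1\}$ with $\mathbb{F}^k$ (or rather $\mathbb{F}(\alpha_1,\dots,\alpha_n)^k$) via the basis $1,x,\dots,x^{k-1}$, where the coordinate indexing matches the column order used in (\ref{matrixT}). Under this identification, the $\ell$-th row (for $0\le\ell\le k-1-|S_i|$) of the $i$-th block of $T$ is precisely the coefficient vector of $x^{\ell}p_i(x)$, a polynomial of degree $\ell+|S_i|\le k-1$, so it indeed lies in $\mathcal{P}_{k-1}$.

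Since $T$ has $\sum_i (k-|S_i|)=k$ rows and is $k\times k$, the determinant condition $\det T=0$ is equivalent to the rows being linearly dependent over the field of rational functions in the $\alpha_j$. Such a dependence is a collection of scalars $q_{i,\ell}$, not all zero, with
\begin{equation*}
\sum_{i=1}^m\sum_{\ell=0}^{k-1-|S_i|} q_{i,\ell}\,x^{\ell}p_i(x)=0.
\end{equation*}
Setting $q_i(x):=\sum_{\ell=0}^{k-1-|S_i|}q_{i,\ell}\,x^{\ell}$, this reads $\sum_{i=1}^m q_i p_i=0$ with $\deg q_i\le k-1-|S_i|=k-1-\deg p_i$, and the $q_i$'s are not all zero. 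Conversely, any such tuple $(q_1,\dots,q_m)$ yields, upon expanding each $q_i$ in the monomial basis, a nontrivial linear dependence among the rows of $T$, hence $\det T=0$.

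Thus the two directions are both immediate from the coefficient-vector correspondence; no step really presents an obstacle beyond keeping track of degree bounds and the row ordering used in (\ref{matrixT}). The only point that deserves a brief sentence is that the rows of $T$ are precisely $\{x^{\ell}p_i:i\in[m],\,0\le\ell\le k-1-\deg p_i\}$ listed in reverse coefficient order, so that row-dependence and polynomial-dependence coincide verbatim.
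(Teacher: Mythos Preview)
Your argument is correct and is exactly the approach the paper has in mind: the paper does not give a proof but simply declares the proposition ``straightforward by definition of $T$ in (\ref{matrixT}),'' i.e., the rows of $T$ are by construction the coefficient vectors of the polynomials $x^{\ell}p_i$, so row dependence is the same as a relation $\sum_i q_ip_i=0$ with the stated degree bounds. You have merely made that one sentence explicit.
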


\begin{lemma}
	\label{lemma_multisetreduce}
	Let $S_1,\dots,S_m$ be multisets in the universe $[n]$ such that $|S_i|\leq k-1$, $\sum_{i=1}^m |S_i|=(m-1)k$ and $\bigcap_{i=1}^{m}S_i=\emptyset$. Define $S_0=\bigcap_{i=1}^{m-1}S_i$ and $S'_i=S_i\backslash S_0$ for $i\in[m]$. Let $T$ and $T'$ be defined as in (\ref{matrixT}) for $(S_i)_{i=1}^m$ and $(S'_i)_{i=1}^m$ respectively (for $(S'_i)_{i=1}^m$, we use $k'=k-|S_0|$).
	Then,
	\begin{enumerate}
		\item $\det T'\neq 0$ implies $\det T\neq 0$.
		\item $(S_i)_{i=1}^m$ satisfies the condition (\ref{condition}) if and only if $(S'_i)_{i=1}^m$ satisfies the condition (\ref{condition}) for $k'=k-|S_0|$.
		\hfill$\diamond$
	\end{enumerate}
\end{lemma}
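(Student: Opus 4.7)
The plan is to handle the two parts of the lemma separately.

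For part (2), which is purely set-theoretic, I would note first that by construction $S_0\subseteq S_i$ for every $i\in[m-1]$ and, since $\bigcap_{i=1}^m S_i=\emptyset$, that $S_0$ and $S_m$ are disjoint as multisets. Using these two facts I would split on whether $m\in I$ or not. If $I\subseteq[m-1]$, then $S_0\subseteq\bigcap_{i\in I}S_i$, so $|\bigcap_{i\in I}S'_i|=|\bigcap_{i\in I}S_i|-|S_0|$ while $|S'_i|=|S_i|-|S_0|$; substituting these into the $(S'_i)$-condition with $k'=k-|S_0|$ makes $|S_0|$ cancel on both sides, giving back the $(S_i)$-condition. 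If instead $m\in I$, then $\bigcap_{i\in I}S_i\subseteq S_m$ is already disjoint from $S_0$, so $|\bigcap_{i\in I}S'_i|=|\bigcap_{i\in I}S_i|$; the bookkeeping of the $|S'_i|$ differs only at $i=m$ (where $|S'_m|=|S_m|$), and an arithmetic check again shows the two inequalities coincide.

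For part (1), I would argue the contrapositive using \Cref{prop_polyeq}. Writing $p_0(x)=\prod_{j\in S_0}(x-\alpha_j)$, the polynomials in (\ref{matrixT}) factor as $p_i=p_0\,p'_i$ for $i\in[m-1]$ (because $S_0\subseteq S_i$ as multisets) and $p'_m=p_m$ (because $S_0\cap S_m=\emptyset$). Assuming $\det T=0$, I pick $q_1,\dots,q_m$ not all zero with $\deg q_i\leq k-1-|S_i|$ and $\sum_i q_i p_i=0$. Factoring out $p_0$ from the first $m-1$ terms gives
\begin{equation*}
p_0\sum_{i=1}^{m-1}q_i\, p'_i=-q_m p_m.
\end{equation*}
Since the $\alpha_j$ are distinct indeterminates and $S_0,S_m$ are disjoint multisets, $p_0$ and $p_m$ share no irreducible factor, hence $p_0\mid q_m$; writing $q_m=p_0\,\tilde q_m$ and cancelling $p_0$ yields
\begin{equation*}
\sum_{i=1}^{m-1}q_i\, p'_i+\tilde q_m\, p'_m=0.
\end{equation*}
Setting $q'_i=q_i$ for $i<m$ and $q'_m=\tilde q_m$, the degree bounds $\deg q'_i\leq k-1-|S_i|=k'-1-|S'_i|$ for $i<m$ and $\deg q'_m\leq(k-1-|S_m|)-|S_0|=k'-1-|S'_m|$ are exactly what \Cref{prop_polyeq} requires for $(S'_i)_{i=1}^m$ with parameter $k'$, and not all $q'_i$ can vanish (else $q_m=p_0\tilde q_m=0$, contradicting the choice of the $q_i$). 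Hence $\det T'=0$, giving the contrapositive of part (1).

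The main obstacle I anticipate is the divisibility step $p_0\mid q_m$: it depends on treating the $\alpha_j$ as algebraically independent indeterminates so that $p_0$ and $p_m$ really are coprime in the polynomial ring in $x$, and on the multiset identity $S_0\cap S_m=\emptyset$, which is what forces the assumption $\bigcap_{i=1}^m S_i=\emptyset$ into the argument. Once these are secured, the remainder of both claims is just careful bookkeeping of sizes and degrees.
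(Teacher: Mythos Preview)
Your proposal is correct and follows essentially the same route as the paper's own (sketch of) proof: for part (1) you argue the contrapositive via \Cref{prop_polyeq}, use the coprimality of $p_0$ and $p_m$ (coming from $S_0\cap S_m=\emptyset$) to obtain $p_0\mid q_m$, divide through by $p_0$, and re-apply \Cref{prop_polyeq} to $(S'_i)$ with parameter $k'$; for part (2) you do the straightforward case split on $m\in I$ that the paper leaves implicit. Your version is simply a fleshed-out form of the paper's sketch, with the degree bookkeeping made explicit.
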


\textit{Sketch of Proof:}
For the first part, if $\det T=0$, using \Cref{prop_polyeq}, we have $\sum_{i=1}^mq_ip_i=0$, which yields ${p_0\triangleq\gcd_{i\in[m-1]}p_i}$ divides $q_mp_m$. Since $S_0\cap S_m=\emptyset$, $(p_0,p_m)=1$ and $p_0$ divides $q_m$. Then, dividing all the terms in $\sum_{i=1}^mq_ip_i$ by $p_0$ and using \Cref{prop_polyeq} again completes the proof.
The second part is straightforward by definition of $(S'_i)_{i=1}^m$.
\hfill$\square$

%%%%%%%%%%%%%%%%%%%%%%%%%%%%%%%%%%%%
%
%     MAIN RESULTS
%
%%%%%%%%%%%%%%%%%%%%%%%%%%%%%%%%%%%%

\section{Main Results}
Due to the lack of a complete proof for \Cref{conj_main}, we will apply the minimal counterexample method in order to present all our findings and to show that the conjecture holds for all $m\leq 6$. If \Cref{conj_main} is not true, then, there will be at least one counterexample which satisfies the conditions in \Cref{conj_main} but for which, $\det T=0$.  Among these counterexamples, there will be one (or many) that is minimal with regards to the parameters $(m,n,k)$ when considered in lexicographical order. In \Cref{lemma_main}, some necessary conditions are listed for a minimal counterexample. Note that these conditions are not necessary for any counterexample but for a \emph{minimal} counterexample. 

It will be needed in the statement of \Cref{lemma_main} to define a new collection of sets $(Q_i)_{i=1}^n$ where $Q_i=\{t: i\in S_t\}\subset[m]$.
\\

\begin{lemma}
	\label{lemma_main}
	If \Cref{conj_main} is not true and $(S_i)_{i=1}^m$ is a counterexample such that $(m,n,k)$ is the smallest possible in the lexicographical order\footnote{It turns out that \Cref{lemma_main} is also true for different orderings of $(m,n,k)$.}, then, the following\footnote{Although there are eight conditions listed, the last five are consequences of the first three.} must be true:
	\begin{enumerate}[i.]
		\item For any nonempty $I\subset [m]$ such that $|I|\neq 1,m$,
		\begin{equation}
			k-1-\left|\bigcap_{i\in I}S_i\right| \geq \sum_{i\in I} k-|S_i|
		\end{equation}
		
		\item For any $i\neq j\in [n]$,
		\begin{equation}
			Q_i\cap Q_j=\emptyset \implies Q_j=[m]\backslash Q_i
		\end{equation}

		\item For any $i, j\in [n]$,
		\begin{equation}
			|Q_i\cup Q_j|\neq m-1
		\end{equation}
		
		\item For any $i\neq j\in[m]$, there exists $\ell\in[n]$ such that $i\in Q_{\ell}$ and $j\notin Q_{\ell}$ (Equivalently, $S_i\not\subset S_j$).

		\item For any $i\in[n]$, $|Q_i|\leq m-3$.
		
		\item For any $i\in [n]$,
		\begin{equation}
			|Q_i|\geq \frac{n-1}{k-1}
		\end{equation}
		Furthermore, since $n\geq k+1$, $|Q_i|\geq 2$.

		\item There exists $i\in[n]$ such that $|Q_i|\geq 3$.
		
		\item If $|Q_i|=2$ for some $i\in[n]$, then, for any $j\in[n]$, $|Q_i\cap Q_j|\geq 1$.
		\hfill$\diamond$\\
	\end{enumerate}
\end{lemma}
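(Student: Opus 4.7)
The plan is to establish the three ``core'' properties (i), (ii), (iii) directly from the minimality of $(m,n,k)$, and then to derive (iv)--(viii) as short consequences.

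For (i), I would suppose that equality holds in (\ref{condition}) for some $I$ with $1<|I|<m$ and derive a contradiction by splitting. Setting $J=\bigcap_{i\in I}S_i$, so that $|J|=k-\sum_{i\in I}(k-|S_i|)$, the original system can be partitioned into sub-system~A, consisting of $(S_i\setminus J)_{i\in I}$ on universe $[n]\setminus J$ with new dimension $k'=k-|J|$, and sub-system~B, consisting of $(S_i)_{i\notin I}$ together with the set $J$ itself (carrying multiplicity $k-|J|$). A direct count shows that both sub-systems satisfy the precondition and inherit condition (\ref{condition}) from the original, and each has strictly fewer than $m$ groups. Minimality then gives a non-vanishing determinant for each sub-system, and \Cref{prop_polyeq} combined with the factorization $p_i=p_J\cdot\prod_{j\in S_i\setminus J}(x-\alpha_j)$ for $i\in I$ should show that these combine into a non-vanishing $\det T$ for the original system, contradicting $\det T=0$. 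This combination step is the main technical hurdle, since the two sub-matrices are not literally block triangular and the reassembly has to be carried out carefully through the polynomial $p_J$.

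For (ii), suppose $i\neq j$ with $Q_i\cap Q_j=\emptyset$ and apply the substitution $\alpha_j\mapsto\alpha_i$. Because no $S_t$ contains both $i$ and $j$, the result is still a genuine set system on universe $[n]\setminus\{j\}$, with $n'=n-1$ and $\det T'=0$. Minimality forces the substituted system to violate (\ref{condition}) on some $I$; comparing $|\bigcap_{t\in I}S'_t|$ with $|\bigcap_{t\in I}S_t|$ shows that such a violation is possible only when $I\subseteq Q_i\cup Q_j$ with both $I\cap Q_i$ and $I\cap Q_j$ non-empty (so $|I|\geq 2$), and then the original system must have satisfied (\ref{condition}) with equality on the same $I$. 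Part (i) rules out $1<|I|<m$, leaving $|I|=m$, whence $Q_i\cup Q_j=[m]$ and therefore $Q_j=[m]\setminus Q_i$. For (iii), the same substitution is used; now $\alpha_i$ becomes a repeated root of every $p_t$ with $t\in Q_i\cap Q_j$, so one enters the multiset regime of \Cref{thm_necessary}, and \Cref{prop_polyeq} together with \Cref{lemma_multisetreduce} can be used to strip common factors and reduce to a genuine system of smaller size. Minimality there again yields the contradiction against $|Q_i\cup Q_j|=m-1$.

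The remaining statements follow quickly. For (iv), applying (\ref{condition}) to $I=\{i,j\}$ gives $|S_i\cup S_j|\geq k>|S_j|$, so $S_i\not\subseteq S_j$. For (v), if $|Q_i|\geq m-2$ then one can choose $j\neq i$ with $|Q_i\cup Q_j|=m-1$ (using that the groups $S_t$ are distinct and each $|S_t|<n$), contradicting (iii). For (viii), if $|Q_i|=2$ and $Q_i\cap Q_j=\emptyset$ then (ii) gives $|Q_j|=m-2$, contradicting (v). Finally (vi) and (vii) follow from double-counting $\sum_\ell|Q_\ell|=(m-1)k$ against the upper bound from (v), with the auxiliary inequality $n\geq k+1$ being deduced from (iv).
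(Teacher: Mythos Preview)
Your treatment of (i)--(iii) matches the paper's approach closely: the splitting in (i) via $S_0=\bigcap_{i\in I}S_i$ and the polynomial identity of \Cref{prop_polyeq}, and the substitutions $\alpha_j\mapsto\alpha_i$ in (ii) and (iii) (with \Cref{lemma_multisetreduce} handling the multiset case) are exactly what the paper does. Your derivations of (iv) and (viii) are also fine. For (v), your route through (iii) works cleanly when $|Q_i|=m-2$, but when $|Q_i|=m-1$ you need a second index $j\neq i$ with $|Q_i\cup Q_j|=m-1$, and the justification ``each $|S_t|<n$'' only guarantees one element outside $S_m$, which could be $i$ itself; the paper avoids this by invoking \Cref{lemma_multisetreduce} directly for the case $|Q_i|=m-1$.

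The real gap is (vi). Double-counting $\sum_\ell|Q_\ell|=(m-1)k$ against the upper bound $|Q_\ell|\le m-3$ from (v) only yields average or global information (a lower bound on $n$), not a pointwise lower bound on each $|Q_i|$. The paper's argument is different and genuinely uses (ii): fixing $i$ with $|Q_i|=\ell$, every $j\neq i$ either satisfies $Q_j\cap Q_i\neq\emptyset$, so $j\in\bigcup_{t\in Q_i}(S_t\setminus\{i\})$, or else by (ii) $Q_j=[m]\setminus Q_i$, so $j\in\bigcap_{t\notin Q_i}S_t$; bounding the two parts separately (the first by $\sum_{t\in Q_i}(|S_t|-1)$, the second via condition (\ref{condition})) gives $n-1\le \ell(k-1)$. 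Also, $n\ge k+1$ comes from (i) (which gives $|S_1\cup S_2|\ge k+1$), not from (iv), which by itself only yields $n\ge k$. Once (vi) is in hand your double-counting does give (vii), but the paper instead argues (vii) combinatorially: if all $|Q_i|=2$ then (ii) forces any two $Q_i,Q_j$ to intersect (since $m\ge5$), and a short sunflower-type argument then pins a common element in every $Q_i$, contradicting (iv).
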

\begin{proof}
	\begin{enumerate}[i.]
		\item % i
		Since $\det T=0$, by \Cref{prop_polyeq}, there exist polynomials $q_1,\dots,q_m$, not all zero, such that ${\deg q_i\leq k-1-\deg p_i}$ and ${\sum_{i=1}^mq_ip_i=0}$.
		Assume the contrary. Then, there exists some $I$ with ${2\leq |I|\leq m-1}$ such that
		\begin{equation}
			\label{eq_proof1}
			k-\left|\bigcap_{i\in I}S_i\right| = \sum_{i\in I} k-|S_i|
		\end{equation}
		Let $J=\{0\}\cup[m]-I$, $S_0=\bigcap_{i\in I}S_i$, $p_0=\gcd_{i\in I}p_i$ and $S'_i=S_i-S_0$, $p'_i=p_i/p_0$ for $i\in I$.
		Then, by (\ref{eq_proof1}), $(S'_i)_{i\in I}$ and $(S_i)_{i\in J}$ satisfy the conditions in \Cref{conj_main} (for $(S'_i)_{i\in I}$, we use $k'=k-|S_0|$).
		By the minimality of $(S_i)_{i=1}^m$, \Cref{conj_main} is true for both $(S'_i)_{i\in I}$ and $(S_i)_{i\in J}$.
		We can write that
		\begin{equation}
			0=\sum_{i=1}^m q_ip_i=\left(\sum_{i\in I}q_ip'_i\right)p_0+\sum_{i\in [m]-I}q_ip_i
		\end{equation}
		Using \Cref{prop_polyeq} for $(S_i)_{i\in J}$, we get $q_i=0$ for $i\notin I$ and $\sum_{i\in I}q_ip'_i=0$. Then, by using \Cref{prop_polyeq} for $(S'_i)_{i\in I}$, we get $q_i=0$ for $i\in I$. Contradiction.
		\\
		\item % ii
		Assume the contrary. Hence, there exists $i\neq j\in[n]$ such that $Q_i\cap Q_j=\emptyset$ and $Q_i\cup Q_j\neq [m]$. W.l.o.g. assume that $Q_{n-1}\cup Q_{n}=\emptyset$ and $m\notin Q_{n-1}\cup Q_{n}$. For all $i\in[m]$, define the sets
		\begin{equation}
			S'_i=\begin{cases}
			S_i & n\notin S_i \: (i\notin Q_{n})\\
			(S_i\backslash\{n\})\cup\{n-1\} & n\in S_i \: (i\in Q_{n})
			\end{cases}
		\end{equation}
		Since $Q_{n-1}\cap Q_{n}=\emptyset$, if $n\in S_i$, then $n-1\notin S_i$ yielding $|S'_i|=|S_i|$. Hence, we can define $T'$ for $S'_i$'s and
		\begin{equation}
			\det T'=\left.\det T\right\vert_{\alpha_{n}=\alpha_{n-1}}=0
		\end{equation}
		Since $[m]\not\subset Q_{n-1}\cup Q_{n}$, $\left|\bigcap_{i=1}^mS'_i\right|= \left|\bigcap_{i=1}^mS_i\right|=0$.
		Since $[n-2]\cap S_i=[n-2]\cap S'_i$, for any nonempty $I\subset[m]$,
		\begin{IEEEeqnarray}{rl}
			\left|\bigcap_{i\in I}S'_i\right|-\left|\bigcap_{i\in I}S_i\right|
			&= \left|\{n-1\}\cap\bigcap_{i\in I}S'_i\right| \nonumber\\
			&\quad\:-\left|\{n,n-1\}\cap\bigcap_{i\in I}S_i\right| \\
			&\leq 1
		\end{IEEEeqnarray}
		Therefore, for $|I|\neq 1,m$,
		\begin{equation}
			k-\left|\bigcap_{i\in I}S'_i\right|\geq k-1-\left|\bigcap_{i\in I}S_i\right|\geq \sum_{i\in I} k-|S'_i|
		\end{equation}
		Hence, $(S'_i)_{i=1}^m$ is also a counterexample with parameters $(m,n-1,k)$. Contradiction.
		\\
		\item % iii
		Assume the contrary. W.l.o.g. assume that ${Q_{n-1}\cup Q_{n}=[m-1]}$.
		For all $i\in[m]$, define the multisets
		\begin{equation}
			S'_i=\begin{cases}
			S_i & n\notin S_i \: (i\notin Q_{n})\\
			(S_i\backslash\{n\})\uplus\{n-1\} & n\in S_i \: (i\in Q_{n})
			\end{cases}
		\end{equation}
		where $\uplus$ is the multiset summation.
		Then, similar to (ii), $\det T'=0$. We have that ${|\bigcap_{i=1}^mS'_i|=|\bigcap_{i=1}^mS_i|=0}$.
		Denote by $\mu_S(j)$ the multiplicity of $j$ in $S$. Then, for any $i\in[m]$,
		$\mu_{S'_i}(n)=0$,
		$\mu_{S'_i}(n-1)=\mu_{S_i}(n-1)+\mu_{S_i}(n)$, and
		$\mu_{S'_i}(j)=\mu_{S_i}(j)$ for $j\in[n-2]$.
		Then, for any nonempty $I\subset[m]$,
		
		\begin{IEEEeqnarray}{rl}
			\left|\bigcap_{i\in I}S'_i\right|-\left|\bigcap_{i\in I}S_i\right|
			&= \sum_{j=1}^{n-1}\min_{i\in I}\mu_{S'_i}(j)
			\!-\!\! \sum_{j=1}^n\min_{i\in I}\mu_{S_i}(j)\\
			&\leq \min_{i\in I}\mu_{S'_i}(n\!-\!1)
			\!-\! \min_{i\in I}\mu_{S_i}(n\!-\!1)\quad\:\:\:\\
			&\leq \min_{i\in I}\mu_{S_i}(n-1) +1 \nonumber\\
			&\quad\quad-\min_{i\in I}\mu_{S_i}(n-1)\\
			&=1
		\end{IEEEeqnarray}
		
		Therefore, for $|I|\neq 1,m$,
		\begin{equation}
			k-\left|\bigcap_{i\in I}S'_i\right|\geq k-1-\left|\bigcap_{i\in I}S_i\right|\geq \sum_{i\in I} k-|S'_i|
		\end{equation}
		So, $(S'_i)_{i=1}^m$ satisfies the conditions in \Cref{conj_main} except they are multisets.
		However, we can apply \Cref{lemma_multisetreduce} by defining ${S'_0\triangleq\bigcap_{i=1}^{m-1}S'_i=\{n-1\}}$ and $S''_i=S'_i\backslash\{n-1\}$.
		Note that $S''_i$'s are normal sets i.e. they do not contain any element with multiplicity more than one. Then, by \Cref{lemma_multisetreduce}, $(S''_i)_{i=1}^m$ is also a counterexample with parameters $(m,n-1,k-1)$. Contradiction.
		\\
		\item % iv
		Assume the contrary. Then, there exists $i\neq j$ such that $S_i\subset S_j$. By (i), 
		\begin{equation}
			k-1-|S_i\cap S_j|\geq 2k-|S_i|-|S_j|
		\end{equation}
		yielding $|S_j|\geq k+1$. Contradiction.
		\\
		\item % v
		Assume that $Q_1=[m-1]$. Then, $S_0=\bigcap_{i=1}^{m-1}S_i\neq\emptyset$. Apply \Cref{lemma_multisetreduce}. Hence, $(S'_i)_{i=1}^m$ is also a counter example with smaller parameters. Contradiction.
		
		Assume that $Q_1=[m-2]$. Then, by (iii), for any ${j\in[n]}$, ${|Q_1\cup Q_j|\neq m-1}$. Then, either $|Q_1\cup Q_j|=m-2$ meaning $Q_j\subset [m-2]$ or $|Q_1\cup Q_j|=m$ meaning ${m-1,m\in Q_j}$. Hence, there is no $Q_j$ containing $m-1$ but not $m$. Contradiction due to (iv).
		\\
		\item % vi
		Assume that $Q_1=[\ell]$. Then, by (ii), for ${j=2,\dots, n}$ either $[\ell]\cap Q_j\neq\emptyset$ or $Q_j=[m]\backslash [\ell]$. Hence, we can partition the set $\{2,\dots,n\}$ into
		\begin{align}
			J_1&=\{2\leq j\leq n : [\ell]\cap Q_j\neq\emptyset \},\\
			J_2&=\{2\leq j\leq n : Q_j=[m]\backslash [\ell] \}
		\end{align}
		If $j\in J_1$, then $Q_j\cap [\ell]\neq\emptyset$, which implies ${j\in\bigcup_{i=1}^{\ell} (S_i\backslash\{1\})}$. Hence, $J_1\subset\bigcup_{i=1}^{\ell} (S_i\backslash\{1\})$. Therefore,
		\begin{equation}\small
			\label{eq_J1}
			|J_1|\leq \left|\bigcup_{i=1}^{\ell} (S_i\backslash\{1\})\right|\leq \sum_{i=1}^{\ell} |S_i\backslash\{1\}|=-\ell+\sum_{i=1}^{\ell} |S_i|
		\end{equation}
		where we use the fact that $1\in S_i$ for $i\in [\ell]$ since ${Q_1=[\ell]}$.
		
		If $j\in J_2$, then $Q_j=[m]\backslash [\ell]$, which implies ${j\in\bigcap_{i=\ell+1}^mS_i}$. Hence, $J_2\subset\bigcap_{i=\ell+1}^m S_i$. So,
		\begin{equation}
		\label{eq_J2}
		k-|J_2|\geq k-\left|\bigcap_{i=\ell+1}^m S_i\right|\geq \sum_{i=\ell+1}^mk-|S_i|
		\end{equation}
		As a result of (\ref{eq_J1}) and (\ref{eq_J2}),
		\begin{align}
			n-1&=|J_1|+|J_2|\\
			&\leq -\ell+k-(m-\ell)k+\sum_{i=1}^m |S_i|\\
			&=\ell(k-1)
		\end{align}
		Thus, $\ell\geq \frac{n-1}{k-1}$.
		
		By (i), $k-1-|S_1\cap S_2|\geq 2k-|S_1|-|S_2|$. Hence, $n=|\bigcup_{i=1}^mS_i|\geq |S_1\cup S_2|\geq k+1$.
		\\
		\item % vii
		Assume the contrary. Then, for all $i\in[n]$, $|Q_i|=2$. By (v) and (vi), $m\geq 5$.
		Then, by (ii), for any $i,j\in[n]$, $Q_i\cap Q_j\neq\emptyset$; so, either $Q_i=Q_j$ or $|Q_i\cap Q_j|=1$.
		W.l.o.g. assume  that $Q_1=\{1,2\}$. Then, by (iv), for any $3\leq j\leq m$, there exists $\ell_j\in[n]$ such that $j\in Q_{\ell_j}$ and $2\notin Q_{\ell_j}$. Then, $Q_{\ell_j}=\{1,j\}$. Then, for any $i\in[n]$, $1\in Q_i$. Then, there is no $Q_i$ containing $2$ but not $1$. Contradiction due to (iv).
		\\
		\item % viii
		Corollary of (ii) and (v).
		
	\end{enumerate}
\end{proof}

%%%%%%%%%%%%%%%%%%%%%%%%%%%
% Consequences of Lemma 2
%%%%%%%%%%%%%%%%%%%%%%%%%%%
\subsection{Consequences of \Cref{lemma_main}}
Firstly, \Cref{lemma_main} allows us to make the assumptions listed from (i) to (viii) when proving \Cref{conj_main}. If \Cref{conj_main} is true under these assumptions, then it must be also true without these assumptions; otherwise, it will lead to a contradiction for the minimal counterexample. For example, the condition (i) implies that ``\emph{it is enough to prove \Cref{conj_main} only for the case where all the inequalities in (\ref{condition}) are strict}''.

Secondly, if one of the conditions listed in \Cref{lemma_main} does not hold, then, the problem can be reduced to the one with a smaller parameter $m,n$ or $k$. The way in which it is reduced can be found in the proof of \Cref{lemma_main} for parts (i)-(iii).

Thirdly, it can help us to solve the problem for small parameters. The conditions (v) and (vii) already imply that \Cref{conj_main} is true for $m\leq 5$ because by condition (vii), there exists a set $Q_i$ with size at least $3$, whose size is upper bounded by $m-3$ in condition (v). By a little more work, we can also solve $m=6$ as shown in \Cref{thm_uptosix}:\\

\begin{theorem}
	\label{thm_uptosix}
	\Cref{conj_main} is true for $m\leq 6$.
	\hfill$\diamond$\\
\end{theorem}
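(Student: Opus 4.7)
The plan is to assume a minimal counterexample for $m = 6$ and derive a contradiction from \Cref{lemma_main}. The case $m \le 5$ is immediate: any minimal counterexample would have some $|Q_{i^*}| \ge 3$ by (vii) but every $|Q_j| \le m - 3 \le 2$ by (v), which is absurd. So only $m = 6$ needs work.

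For $m = 6$, parts (v)--(vii) of \Cref{lemma_main} pin each $|Q_j|$ to $\{2,3\}$ with at least one $|Q_{i^*}| = 3$. After relabelling, I fix $Q_{i^*} = \{1,2,3\}$. Using (ii), (iii) and (viii) I first catalogue which $Q_j$ are admissible relative to $Q_{i^*}$: a 2-element $Q_j$ must meet $\{1,2,3\}$ by (viii), while a 3-element $Q_j$ cannot satisfy $|Q_j \cup Q_{i^*}| = 5$ by (iii), so $|Q_j \cap Q_{i^*}| \in \{0,2,3\}$, and combined with (ii) this forces $Q_j$ to be either $\{1,2,3\}$, the complementary triple $\{4,5,6\}$, or a 3-set sharing exactly two elements with $\{1,2,3\}$.

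The next step is to combine this catalogue with averaged versions of condition (i). Writing $n_s = |\{j : |Q_j| = s\}|$, we have $n_2 + n_3 = n$ and $2n_2 + 3n_3 = 5k$; summing the strict inequality of (i) over all $I \subset [6]$ with $|I| \in \{2,3,4,5\}$ (using that $\binom{|Q_j|}{|I|}$ vanishes for $|I| \ge 4$ since $|Q_j| \le 3$) yields structural bounds such as $n \ge 5$ and $k \ge 6$. Combining these with the pair-separation requirement from (iv) (each pair of indices in $[6]$ must be separated by some admissible $Q_\ell$, and the catalogue limits which types can perform each separation), and with the specific strict inequalities coming from applying (i) to $I = \{1,2,3\}$ and to the sets arising from the catalogue, I would aim to produce a linear infeasibility in $n$, $k$, $n_2$, $n_3$ and the sizes of the key intersections, contradicting the existence of a minimal counterexample.

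The principal obstacle I foresee is closing this linear system: the averaged bounds from (i) alone are not sharp enough, so the argument has to exploit the fine-grained catalogue together with (iv), probably in a case split on whether the complementary triple $\{4,5,6\}$ appears among the $Q_j$'s and on how many ``mixed'' 3-sets are needed to separate the pairs within $\{1,2,3\}$ and within $\{4,5,6\}$. A secondary difficulty is treating configurations where some intersection nearly saturates (\ref{condition}); the multiset reduction of \Cref{lemma_multisetreduce} is the intended escape mechanism, but its hypotheses (in particular the disjointness $\bigcap_i S_i = \emptyset$ and the correct ``last'' index playing the role of $S_m$) must be verified afresh in each sub-case.
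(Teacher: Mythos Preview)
Your handling of $m\le 5$ and your initial catalogue of admissible $Q_j$'s for $m=6$ (every $|Q_j|\in\{2,3\}$, and any two size-$3$ sets among the $Q_j$ cannot intersect in exactly one element, by (iii)) are correct and are exactly the starting point of the paper's argument.

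The gap is in what you do next. You propose to bring in condition (i) quantitatively, summing the strict inequalities over many $I$'s to produce a linear system in $n$, $k$, $n_2$, $n_3$ and then hope for an infeasibility. You yourself flag that the averaged bounds from (i) are not sharp enough on their own and that the case analysis you envision is not worked out. This is a genuine gap: the proposal is a plan, not a proof, and the route you outline is not known to close. The secondary discussion about \Cref{lemma_multisetreduce} is also beside the point here, since for a \emph{minimal} counterexample all the reduction hypotheses are already baked into the conclusions (i)--(viii); there is nothing further to verify ``afresh.''

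The paper's proof shows that the quantitative detour is unnecessary: once you have (iii), (iv), (v), (viii) and the fact that each $|Q_j|\in\{2,3\}$, the contradiction is purely combinatorial and never touches (i), $n$, or $k$. The case split is not on whether the complementary triple $\{4,5,6\}$ appears, but on the number of \emph{distinct} size-$2$ sets among the $Q_j$: at least two, exactly one, or none. In each case one repeatedly invokes (iv) (``for every ordered pair $a\neq b$ some $Q_\ell$ contains $a$ but not $b$'') to force specific $Q_\ell$'s to exist, and then checks that two of the size-$3$ sets so produced meet in exactly one element, contradicting (iii) (or, in one branch, that a forced $Q_\ell$ has size $\ge 4$, contradicting (v)). Each case is only a few lines. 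Your catalogue already contains the key constraint $|Q_i\cap Q_j|\neq 1$ for size-$3$ sets; following it up with (iv) directly, rather than with averaged inequalities, is what finishes the argument.
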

\begin{proof}
	Assume the contrary. Then, there exists a minimal counterexample $S_1,\dots,S_m$ such that $m\leq 6$.
	By (v) and (vii), $m\geq 6$. Hence, $m=6$.
	By (v) and (vi), $|Q_i|\in\{2,3\}$ for all $i\in[n]$. 
	If $Q_i$ and $Q_j$ are size $3$, then $|Q_i\cap Q_j|\neq 1$ by (iii). There will be three cases:
	
	\textit{Case 1.} $|\{Q_i: i\in[n], |Q_i|=2 \}|\geq 2$.
	Assume that $Q_1$ and $Q_2$ are two different sets of size $2$. By (viii), their intersection must have exactly one element. W.l.o.g. assume that ${Q_1=\{1,2\}}$ and $Q_2=\{1,3\}$. By (viii), for any $i\in[n]$, $|Q_i\cap Q_1|\geq 1$ and $|Q_i\cap Q_2|\geq 1$; hence, either $1\in Q_i$ or $2,3\in Q_i$. For any $j=4,5,6$, there is a set containing $j$ but not $1$, which has to be $\{j,2,3\}$. Let $Q_3=\{2,3,4\}, {Q_4=\{2,3,5\}}, Q_5=\{2,3,6\}$. Let $Q_6$ be the set containing $4$ but not $2$. Then, $1\in Q_6$. Since $|Q_4\cup Q_6|,|Q_5\cup Q_6|\neq 5$, we have $5,6\in Q_6$, which means $|Q_6|\geq 4$. Contradiction.
	
	\textit{Case 2.} $|\{Q_i: i\in[n], |Q_i|=2 \}|=1$.
	W.l.o.g. let ${Q_1=\{1,2\}}$. If $Q_i\neq Q_1$, then $|Q_i|=3$ and by (viii) either $1\in Q_i$ or $2\in Q_i$. Let $Q_2$ be the set containing $3$ but not $2$. Then, wlog. $Q_2=\{1,3,4\}$. Let $Q_3$ be the set containing $3$ but not $1$. Then, $Q_3=\{2,3,4\}$. Let $Q_4$ be the set containing $5$ but not $2$. Then, $Q_4=\{1,5,x\}$ for some $x$. Since $|Q_2\cap Q_4|\neq 1$, $x\in\{3,4\}$. Then, $|Q_3\cap Q_4|=1$. Contradiction.
	
	\textit{Case 3.} For all $i\in[n]$, $|Q_i|=3$. Let $Q_1=\{1,2,3\}$. Let $Q_2$ contain $2$ but not $3$. Then, wlog $Q_2=\{1,2,4\}$. Let $Q_3$ contain $2$ but not $1$. Then, $Q_3=\{2,3,4\}$. Let $Q_4$ contain $3$ but not $2$. Then, $Q_4=\{1,3,4\}$. Let $5\in Q_5$. Then, $Q_5$ has at least one element from $\{1,2,3,4\}$. Wlog assume that $1\in Q_5$. Then, $Q_5=\{1,2,5\}$. $|Q_3\cap Q_5|=1$. Contradiction.
\end{proof}

%%%%%%%%%%%%%%%%%%%%%%%%%%%%%%%%%%%%
%
%     CONCLUSION
%
%%%%%%%%%%%%%%%%%%%%%%%%%%%%%%%%%%%%

\section{Conclusion}
We have established the correctness of the GM-MDS conjecture of Dau \textit{et al.} for $m\leq 6$, where $m$ is the number of distinct support sets defined on the rows of the generator matrix.
The result subsumes all earlier known results on the GM-MDS conjecture except for those pertaining to sparsest and balanced generator matrices. Our results followed has a careful study of properties that must hold for any minimal counterexample to the conjecture.
It remains to be seen whether this approach can be extended to prove the conjecture for values of $m$ beyond $6$.

%%%%%%%%%%%%%%%%%%%%%%%%%%%%%%%%%%%%
%
%     REFERENCES
%
%%%%%%%%%%%%%%%%%%%%%%%%%%%%%%%%%%%%
\bibliographystyle{IEEEtran}
\bibliography{bibliofile}

\end{document}